\renewcommand{\paragraph}[1]{\vspace*{1mm}\noindent{\bf #1}}
\newcommand{\eat}[1]{}
\newtheorem{definition}{Definition}[section]
\newtheorem{theorem}[definition]{Theorem}
\newtheorem{example}[definition]{Example}
\newcommand{\squishlist}{
\begin{list}{$\bullet$}
{ \setlength{\itemsep}{0pt} \setlength{\parsep}{3pt}
\setlength{\topsep}{3pt} \setlength{\partopsep}{0pt}
\setlength{\leftmargin}{1.5em} \setlength{\labelwidth}{1em}
\setlength{\labelsep}{0.5em} } }
\newcommand{\squishend}{
\end{list}  }
\newcounter{Lcount}
\newcommand{\squishlisttwo}{
\begin{list}{\arabic{Lcount}.}
{ \usecounter{Lcount}
\setlength{\itemsep}{0pt}
\setlength{\parsep}{0pt}
\setlength{\topsep}{0pt}
\setlength{\partopsep}{0pt}
\setlength{\leftmargin}{2em}
\setlength{\labelwidth}{1.5em}
\setlength{\labelsep}{0.5em} } }
\newcommand{\squishlisttwob}{
\begin{list}{$\bullet$}
{ \setlength{\itemsep}{0pt} \setlength{\parsep}{0pt}
\setlength{\topsep}{0pt} \setlength{\partopsep}{0pt}
\setlength{\leftmargin}{2em} \setlength{\labelwidth}{1.5em}
\setlength{\labelsep}{0.5em} } }
\begin{document}

\title{Handling Skew in Multiway Joins in Parallel Processing}

\numberofauthors{1}
\author{Foto N. Afrati$^{\dag}$,
Jeffrey D. Ullman$^{\ddag}$,
Angelos Vasilakopoulos$^{\dag}$\\
\affaddr{\large $^{\dag}$National Technical University of Athens,
$^{\ddag}$Stanford University}\\
\email{afrati@softlab.ece.ntua.gr, avasilako@gmail.com, ullman@gmail.com}}

\date{}

\maketitle

{\bf Abstract:}
Handling skew is one of the major challenges in  query processing. In distributed
computational environments such as MapReduce,  uneven distribution
of the data to the servers is not desired. One of the dominant measures that
we want to optimize in distributed environments is communication cost. In a MapReduce job
this is the amount of data that is transferred from the mappers to the reducers.
In this paper we will introduce a novel technique for handling skew when we want to compute a multiway join in one MapReduce round with  minimum communication cost.
This technique is actually  an adaptation of the Shares algorithm~\cite{AfUl}.

\section{Introduction}

Systems such as Pig or Hive that implement SQL or relational algebra over MapReduce have mechanisms to deal with joins where there is significant skew; i.e.,
certain values of the join attribute(s) appear very frequently
(see, e.g.,~\cite{hive,pig,
Gates09}.
These systems use a two-round algorithm, where the first round identifies the {\em heavy hitters (HH)}, those values of the join attribute(s) that occur in at least some given fraction of the tuples. In the second round, tuples that do not have a heavy-hitter for the join attribute(s) are handled normally. That is, there is one reducer\footnote{In this paper, we use the term {\em reducer} to mean the application of the Reduce function to a key and its associated list of values. It should not be confused with a Reduce task, which typically executes the Reduce function on many key and their associated values.}
for each key, which is associated with a value of the join attribute(s).
 Since the key is not a heavy hitter, this reducer handles only a small fraction of the tuples, and thus will not cause a problem of skew.
For tuples with heavy hitters,  new keys are created that are handled along with the other keys (normal or those for other heavy hitters) in a single MR job.
The new keys in
these systems are created with  a simple technique as in the following example:
\vspace*{-.3cm}
\begin{example}
\label{1-ex}

We have to compute the join $R(A,B) \bowtie S(B,C)$ using a given number, $k$, of reducers.
Suppose  value $b$ for attribute $B$ is identified as a heavy hitter.
Suppose there are $r$ tuples of $R$ with $B=b$ and there are $s$ tuples of $S$ with $B=b$. Suppose also for convenience that $r>s$.
The distribution to $k$ buckets/reducers is done in earlier approaches by partitioning the data of one of the relations in $k$ buckets (one bucket for each reducer) and sending the data of the other relation to all reducers. Of course since $r>s$, it makes sense to choose relation $R$ to partition. Thus
values of attribute $A$ are hashed to $k$ buckets, using a hash function $h$, and each tuple of relation $R$  with $B=b$ is sent to one reducer -- the one that corresponds to the bucket that the value of the first argument of the tuple was hashed. The tuples of $S$ are sent to all the $k$ reducers.
Thus the number of tuples transferred from mappers to reducers is $ r + ks$.
\end{example}

\vspace*{-.3cm}

The approach described above appears not only in Pig and Hive, but dates back to~\cite{WolfDY93}. The latter work, which looked at a conventional parallel implementation of join rather than a MapReduce implementation, uses the same (non-optimal) strategy of choosing one side to partition and the other side to replicate.
In particular, these techniques are not optimal with respect to  {\em communication cost} (i.e.,  the number of inputs
transferred from the mappers to the reducers~\cite{AfratiSSU13,SOCC}).

{\bf Our contribution:}
In Example~\ref{2-way-ex} we show how we  can do significantly better than the standard technique of Example~\ref{1-ex}.  In the rest of the paper we show how the idea
 in Example~\ref{2-way-ex}  can be
extended to apply on any multiway join and for any number of heavy hitters.
 In particular, we show how to adapt Shares algorithm \cite{AfUl} to find a solution
that minimizes communication cost in the case there are heavy hitters.

\vspace*{-.5cm}
\begin{example}
\label{2-way-ex}
We take again  the join $R(A,B) \bowtie S(B,C)$.
We partition the tuples of $R$ with $B=b$ into $x$ groups and we also partition the tuples of $S$ with $B=b$ into $y$ groups, where $xy = k$.
We use one of the $k$ reducers for each pair $(i, j)$ for a group $i$ from R and for a
group $j$ from $S$. Now we are going to
partition tuples from R and S and we use hash functions $h_r$ and $ h_s$ to do the partitioning.
We send each tuple $(a,b)$ of $R$ to all reducers of the form $(i, q)$,
where $i=h_r(a)$
 is the group in which tuple $(a,b)$ belongs and $q$ ranges over all $y$ groups. Similarly,  we send each tuple $(b,a)$ of $R$ to all reducers of the form $(q, i)$,
where $i=h_s(a)$
 is the group in which tuple $(b,a)$ belongs and $q$ ranges over all $x$ groups.
Thus each tuple with $B=b$ from $R$ is sent to $y$ reducers and each tuple with
$B=b$ from $S$ is sent to $x$ reducers. Hence the
communication cost is $ry + sx$. We can show (see \cite{AfUl}) that by minimizing $ry + sx$ under
the constraint $xy = k$ we achieve communication cost equal to $\sqrt{2krs}$, which is always
less than what we found in Example~\ref{1-ex} which was $ r + ks$. The proof is easy:  $\sqrt{2krs} \leq  r + ks$
or  $0 \leq  \sqrt{r/s} - \sqrt{2k} + k\sqrt{s/r}$, which is a second order polynomial wrto $\sqrt{k}$ as unknown and it is positive for any $k$. Moreover observe that the improvement is significant:
The optimal communication cost grows as $\sqrt{k}$, while $ r + ks$ grows linearly with $k$.
\end{example}
\vspace*{-.4cm}

{\bf Related Work}
There is a lot of work over the decades about how to handle skew when we process queries.
We will limit ourselves here to recent work that considers joins in MapReduce or
discusses the Shares algorithm.
In \cite{BeameKS13} it is proven that with high probability the  Shares algorithm distributes tuples evenly
on uniform databases (these are defined precisely in \cite{BeameKS13} to be databases which resemble the case of
random data). Then,
 \cite{BeameKS14}  generalizes and enhances  results in  \cite{BeameKS13} and
\cite{AfratiSSU13}.
\cite{BeameKS14}
 describes how the Shares algorithm behaves on skewed data: it shows that the algorithm is resilient
to skew, and gives an upper bound even on skewed databases.  However this resilience applies
to ordinary joins that use many of the attributes in one relation allowing thus the tuples with a heavy hitter
to be distributed in many reducers. However this is not the case in the 2-way join example we gave -- and many others.

\section{Shares Algorithm}
The algorithm is based on
a schema according to which we distribute the data to a given number of $k$
reducers. Each reducer is defined by a vector, where each component of the vector corresponds to an attribute.
The algorithm uses a number of independently chosen random hash functions $h_i$
one for
each attribute $X_i$. Each tuple is sent to a number of reducers depending on the value of $h_i$ for the
specific attribute $X_i$ in this tuple. If $X_i$ is not present in the tuple, then the tuple is sent to all
reducers for all $h_i$ values.
For an example, suppose we have the 3-way join $R_1(X_1,X_2) \bowtie
R_2(X_2,X_3) \bowtie R_3(X_3,X_1)$.
In this example each reducer is defined by a vector $(x,y,z)$.
A tuple $(a,b)$ of $R_1$ is sent to a number of reducers
and specifically to reducers
$(h_1(a), h_2(b), i)$ for all $i$. I.e., this tuple needs to be replicated a number of times, and specifically
in as many reducers as is the number of buckets into which $h_3$ hashes the values of attribute $X_3$.
\vspace*{-.1cm}

When the hash function $h_i$ hashes the values of attribute $X_i$ to $x_i$ buckets, we say that the {\em share} of $X_i$ is $x_i$.
The communication cost is calculated to be, for each relation, the size of the relation times the replication
that is needed for each tuple of this relation. This replication can be calculated to be the product
of the shares of all the attributes that do not appear in the relation.
In order to keep the number of reducers equal to $k$, we need to calculate the shares so that their product is equal to $k$.
\vspace*{-.1cm}

Thus, in our example, the communication cost is $r_1x_3+r_2x_1+r_3x_2$ and we must have $x_1x_2x_3=k$. (We denote the size of a relation $R_i$ by $r_i$.)
In \cite{AfUl}, it is explained how to use the Lagrangean method to find the shares that minimize the communication cost, for any multiway join.
\vspace*{-.1cm}

We are going to need an important observation that was proven in \cite{AfUl}.
An attribute $A$ is {\em dominated} by attribute $B$ in the join if $B$ appears in all relations where $A$ appears.
It is shown that if an attribute is dominated, then it does not get a share, or, in other words, its share is equal to 1.
\subsection{Our Setting}

We saw how to compute  the 2-way join  in Example~\ref{2-way-ex} for the tuples which have one HH.
For this join, we  took two sets of keys:
\squishlisttwob
 \item The set of keys as presented   in Example~\ref{2-way-ex}
which send tuples with HH to a number of reducers in order to compute the join of tuples with HH.
\item The set of keys which send tuples without HH to a number of reducers in order to compute the join of tuples without HH. This second set is formed exactly as in the Shares algorithm.
\squishend
\vspace*{-.2cm}
It is convenient to see these two sets of keys as corresponding to two joins which we call
{\em residual joins}, and which actually differ only on the subset of the data they are applied. One applies the original join on the data with HH and the
other  applies the original join on the data without HH.

The method we presented in Example~\ref{2-way-ex} is actually based on the Shares algorithm. To see this, we  can be
equivalently thought as: We replace
 each tuple of relation $R$ with a tuple where $B$ has distinct fresh values
$b_1,b_2,\ldots $ and the same for the tuples of relation $S$ with $B$ having values $b'_1,b'_2,\ldots$.
Now we can apply the Shares algorithm to find the shares and distribute the tuples to reducers normally. The only
problem with this plan is that the output will be empty because we have chosen $b_i$s and $b'_i$s
to be all distinct.
This problem however has an easy solution, because, we can keep this replacement
to the conceptual level, just so to create a HH-free join and be able to apply the Shares algorithm
and compute the shares optimally. When we transfer the tuples to the reducers, however, we transfer the
original tuples and thus, we produce the desired output. We explain this conceptual structure in Section~\ref{mod-shares-ex-sec}.

Our setting is as follows: We have $k$ reducers to use for computing all residual joins.
We assume each residual join $J_i$ uses $k_i$ of those reducers, thus one
constraint is $k_1+k_2+\cdots =k$.
For each residual join, we need to compute the communication cost expression. The objective function to minimize is the sum of the cost expressions over all residual joins, under the
constraint:  for each residual join $J_i$,  the product of the attribute shares must be equal to $k_i$.

The aim of this paper is to show how to systematically apply the idea explained for the 2-way join on
any multiway join with any number of HH.
The structure of the rest of the paper is the following:
\vspace*{-.2cm}
\squishlisttwo
\item We decompose into residual joins, i.e., we partition the data into subsets and we view a residual join as the
original join  applied~on~one~of~the~subsets~(Section 3).

\item We explain how to form a {\em HH-free residual join} and  how to compute the communication cost expression
for each residual join (Section 4).
\item We show how the cost expression for~each residual~join~is written in~a~simple~and~effective way~(Section 5).
\squishend
\section{Decomposition wrto HH}
First we need some definitions.

\vspace*{-.3cm}
 For each attribute  $X_i$ we define a set $L_{X_i}$  of {\em types}:
 \vspace*{-.3cm}
\squishlisttwob
\item
If $X_i$ has no values that are heavy hitters, then $L_{X_i}$ comprises of only one type, $T_{-}$,  called the {\em ordinary type}.\footnote{Ordinary type represents all other values of attribute $X_i$, the ones that are not heavy hitters.}
\item If $X_i$ has $p_i$ values that are heavy hitters, then $L_{X_i}$ comprises  of $1+p_i$ types:
one type $T_b$ for each heavy hitter, $b$, of $X_i$, and one ordinary type $T_{-}$.
\squishend
\vspace*{-.1cm}
A {\em combination of types}, $C_T$, is an element of the Cartesian product of the sets $L_{X_i}, i=1,2\ldots$ and defines a {\em residual join}.

 \vspace*{-.2cm}
E.g., for the query in in Example~\ref{2-way-ex}, we consider two residual joins, one for type combination $C_T=\{ A:T_{-}, B:T_{-},  C:T_{-} \}$ (without HH) and one for type combination $C_T=\{ A:T_{-}, B:T_{b},  C:T_{-} \}$ (with HH).

\vspace*{-.15cm}
Each $C_T$ defines a {\em residual join} which is the join computed only on a subset of the data.
Specifically, if an attribute $X$ has ordinary type in the current $C_T$ we  exclude the tuples for which $X=HH$. E.g.,  if there are two HH  $X=b_1$ and $X=b_2$, then we exclude (from all relations) all tuples with
 $X=b_1$ and $X=b_2$. If attribute $X$ is of type $T_b$ then we exclude (from all relations) the
tuples with value $X\neq b$.
\vspace*{-.3cm}
\begin{example}
\label{run1-ex}
We take as our running example the  3-way join:~~~ $J=R(A,B)\bowtie S(B,E,C) \bowtie T(C,D)$

\vspace*{-.15cm}
Suppose attribute $B$ has two HHs,  $B=b_1$ and $B=b_2$ and attribute $C$ has one HH,
and $C=c_1$. Thus attribute $B$ has three types, $T_{-}$, $T_{b_1}$ and $T_{b_2}$, attribute
$C$ has two types, $T_{-}$ and $T_{c_1}$ and the rest of the attributes have a single type, $T_{-}$.
Thus we have $3\times 2=6 $ residual joins, one for each combination.
By $r,s,t$ we denote the sizes of the relations that are {\em relevant} in each residual join, i.e., the number of tuples from each relation that
contribute in the particular residual join.  We list the residual joins:
\vspace*{-.2cm}
\squishlisttwo
\item All attributes  of type $T_{-}$. Here  $r$ is the number of only those tuples of relation $R$ for which
 $B\neq b_1$ and $B\neq b_2$, $s$ is the number of only those tuples of relation $S$ for which
 $B\neq b_1$ and $B\neq b_2$ and $C\neq c_1$, and $t$ is the number of those  tuples in relation $T$ for which $C\neq c_1$.
 \item All attributes of type $T_{-}$, except  $B$ of typle $T_{b_1}$.
\item All attributes  of type $T_{-}$, except  $B$  of typle $T_{b_2}$.
 \item All attributes  of type $T_{-}$, except  $C$ of typle $T_{c_1}$.

 \item All attributes a of type $T_{-}$, except  $B$ of type $T_{b_1}$ and  $C$  of typle $T_{c_1}$.
 \item All attributes  of type $T_{-}$, except  $B$ of type $T_{b_2}$ and  $C$  of typle $T_{c_1}$.

\squishend
\end{example}
\vspace*{-.3cm}
Each residual join is treated by the Shares algorithm as a separate join and a set of keys are defined that hash
each tuple as follows: A tuple $t$ of relation $R_j$ is sent to reducers of combinatrion $C_T$ only if the values of the tuple satisfy the constraints of $C_T$ as concerns values of HH.
\vspace*{-.3cm}
\begin{example}
\vspace*{-.2cm}
We continue from Example~\ref{run1-ex}.
Each tuple is sent to a number of reducers according to the keys created for each
residual join(we will provide more details later in the paper). E.g., a tuple $t$ from relation $R$ is sent to reducers as follows:
\vspace*{-.2cm}
\squishlisttwo
\item If $t$ has $B=b_1$ then it is sent to reducers created in items (2) and (5)  in Example~\ref{run1-ex}.
\item If $t$ has $B\neq b_1$  and $B\neq b_2$ then it is sent to reducers created in items (1) and (4).
\item If $t$ has $B=b_2$ then it is sent to reducers created in items  (3) and (6).
\squishend

\end{example}

\vspace*{-.5cm}
\section{Writing the Cost Expression}
\label{mod-shares-ex-sec}

In this section we will explain how to form a {\em HH-free residual join} and  how to compute the communication cost expression
for each residual join.
The structure we use in this section is conceptual, for the sake of showing how to write the cost expression. In practice, we do not materialize  $R(A,B_R)$ and $ S(B_S,C)$ or the auxiliary relation (definitions of these will be given shortly) -- as we
will explain in the next section. We begin with an example:
\vspace*{-.3cm}
\begin{example}
\label{res-2way-ex}
We  consider the residual join with HH $B=b$ for  the  join of Example~\ref{2-way-ex} which we rewrite here: $R(A,B) \bowtie S(B,C)$.
In order to do that,
we will equivalently imagine that we have to compute:
\squishlisttwo
 \item A join $R(A,B_R) \bowtie R_{aux} (B_R , B_S)
\bowtie S(B_S,C)$
\item On new database $D'$ which comes from $D$: We populate $R(A,B_R) $ with the same number of tuples as the original $R(A,B)$ has:
For each tuple $t$ with $B=b$, we add in  $R(A,B_R) $ a tuple where we have replaced the
$B=b$ with $B_R=b.t.R$. We do similarly for $S(B_S,C)$ replacing $B=b$ in tuple $t$ by
$B_S=b.t.S$.
The relation $R_{aux}(B_R,B_S)$ is the Cartesian product of $B_R$ and $B_S$.

\squishend
\squishlisttwob
\item {\em Observation 1:} Database $D'$ now has no heavy hitters. So, we can apply the original
Shares algorithm. The introduction of auxiliary attributes and relations may seem now as  if complicates things significantly, but, as we shall show in Section~\ref{dominance-sec}, it does not.
\squishend
\vspace*{-.3cm}
Thus if, in the database $D$, relation $R$ is $\{ (1,2),(3,2),(4,2)\}$ and $S$ is  $\{ (2,5),(2,6)\}$
then, in the database $D'$ we have (assuming $B=2$ qualifies for HH):\\\\
  $R(A,B_R) $ is $\{ (1,2.1.R),(3,2.3.R),(4,2.4.R)\}$.\\
$S(B_S,C)$ is  $\{ (2.5.S,5),(2.6.S,6)\}$.\\
(I.e., we conveniently identify the tuple of $R$ with the value of its first argument and the tuple of $S$ with the value of its second argument.)\\
The auxiliary relation $R_{aux}(B_R,B_S)$ is :\\ $\{ (2.1.R,2.5.S),(2.3.R,2.5.S),(2.4.R,2.5.S),$\\ $(2.1.R,2.6.S),(2.3.R,2.6.S),(2.4.R,2.6.S)\}$

The residual join computation has no heavy hitters, thus, we apply the original Shares algorithm, only that, when we compute the cost expression
 we ignore the communication cost for the auxiliary relation.\footnote{We can igonore it because we know what are the tuples in the
auxiliary relation and we can imagine that we can recreate them in the reducers. } Thus the communication cost of the
residual  join is again $ry + sx$, which is the same expression as in Example~\ref{2-way-ex}.
\end{example}
\vspace*{-.3cm}

The conceptual structure in the general case is as follows:
 For each combination of types, $C_T$,  we compute a HH-free residual join whose cost expression is written as follows:
\squishlisttwo
\item If attribute $X_i$  has non-ordinary type  in $C_T$ then:\\
--We introduce a number of
auxiliary attributes, one auxiliary attribute  for each relation $R_j$ where  attribute $X_i$ appears.  We denote the auxiliary
attribute for relation $R_j$ by  $X_{i-R_j}$.\\
--In the schema of each relation $R_j$ where $X_i$ appears, we replace  $X_i$ with  attribute $X_{i-R_j}$.

\item We form the residual join $J'$ for $C_T$ by   adding to original join new relations as follows: one relation, $R^{X_i}_{aux}$,
for each attribute $X_i$ which is not
of ordinary type. The schema of~ that~ relation~ consists of the attributes
$X_{i-R_j}$  for each $j$ such that $X_i$ is an attribute of $R_j$.

\item Now we write the communication cost expression for $J'$  as in the Shares algorithm taking care that:
\squishlisttwo
\item[a.] The communication cost expression does not include a term for auxiliary relations.
\item[b.] The size of each relation in $J'$ that we use in the cost expression is the number of tuples that
have as values in the arguments with heavy hitters the specific value for this combination of types.
\squishend
\squishend

Now we will discuss in the next subsection how (and why) to simplify the cost expression not to inclue
share variables for the auxiliary attributes.

\section{Dominance Relation: Its Role in Simplifying the Cost Expression}
\label{dominance-sec}

The property of the dominance relation allows us to write the cost expression for each residual join
in a simple manner. We use the theorem:

\vspace*{-.3cm}

\begin{theorem}
The share of each auxiliary attribute is equal to 1 in the optimum solution.\footnote{Sometimes, we have a tie where in a relation all attributes appear only once; in this case we break ties declaring always the auxiliary
attribute as dominated.}
\end{theorem}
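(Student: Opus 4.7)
The plan is to reduce the statement to the dominance result of~\cite{AfUl} that was recalled at the end of Section~2: an attribute whose set of containing relations is included in some other attribute's set of containing relations receives share $1$ at the optimum. First I would fix an auxiliary attribute $X_{i-R_j}$ and list the relations of the HH-free residual join $J'$ in which it occurs. By the construction in Section~\ref{mod-shares-ex-sec} it occurs in precisely two relations: $R_j$ (where it replaces $X_i$) and the auxiliary relation $R^{X_i}_{aux}$.

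Next I would use the stipulation of Section~\ref{mod-shares-ex-sec}(a), that auxiliary relations contribute no term to the communication-cost expression. Consequently $x_{i-R_j}$ appears as a factor in the cost term of every original relation $R_k$ with $k \neq j$, and in no other cost term; equivalently, as far as the cost function is concerned, $X_{i-R_j}$ behaves like an attribute that occurs in $R_j$ alone.

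Having made this observation, any other attribute $Y$ of $R_j$ dominates $X_{i-R_j}$ in the cost-relevant sense: since $Y\in R_j$, a cost term containing $x_Y$ must come from some $R_k$ with $Y\notin R_k$, which forces $k\neq j$, and every such $R_k$ has $x_{i-R_j}$ as a factor. Applying the Lagrangean argument of~\cite{AfUl}---or, equivalently, performing the share-swap $x_Y \leftarrow x_Y\,x_{i-R_j}$, $x_{i-R_j}\leftarrow 1$, which preserves the product constraint $\prod x = k_i$ and can only decrease the individual cost terms---then yields $x_{i-R_j}=1$ at the optimum. Carrying out this argument for each auxiliary attribute in turn gives the full claim, and the swaps for distinct auxiliary attributes do not interfere because they act on disjoint pairs of variables.

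The main obstacle I anticipate is the degenerate case singled out in the theorem's footnote: when every non-$X_i$ attribute of $R_j$ appears only in $R_j$, or when $R_j$ has no attributes besides $X_i$ at all, the swap above leaves every cost term exactly unchanged, so the dominance argument produces only a tie rather than forcing $x_{i-R_j}=1$. In that case I would invoke the tie-breaking convention stated in the footnote, always declaring the auxiliary attribute to be the dominated one, so that we may still choose an optimum with $x_{i-R_j}=1$ without loss.
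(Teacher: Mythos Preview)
Your proposal is correct and follows the same strategy as the paper: observe that, since auxiliary relations contribute no term to the cost, each auxiliary attribute $X_{i-R_j}$ effectively occurs in the single relation $R_j$, is therefore dominated by any other attribute of $R_j$, and hence receives share~$1$ by the dominance result of~\cite{AfUl}. The only minor difference is in the degenerate case: the paper specifically treats the situation where \emph{all} attributes of $R_j$ are auxiliary, noting that then $R_j$ contains at most one tuple in this residual join (all coordinates are fixed HH values) so its shares are trivially~$1$, whereas you cover the broader tie situation by directly invoking the footnote's tie-breaking convention; both resolutions are adequate.
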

\begin{proof}
  Each auxiliary attribute appears in one relation of the original join and in one auxiliary relation.
 Since we do not add a term in the cost expression for the auxiliary
relation,  we imagine that we write the cost expression for a join which is  the residual join without the auxiliary relations. Hence,
 an auxiliary attribute appears only in one relation,
hence it is dominated by a ordinary (non-HH in this residual join) attribute. There is the exception:  when all attributes in a relation are auxiliary attributes. In this case
there is only one tuple in the relation in this particular residual join, so  all attributes in the relation get share =1.
\end{proof}
\vspace*{-.3cm}
Thus we established that:
\squishlisttwob
\item The cost expression for each residual join can be derived from the cost expression of the original join (before dominance rule simplification) by making the shares of auxiliary attributes equal to 1.
\item Each tuple is hashed to reducers according to the values of the non-HH attributes in this tuple.
\squishend

\vspace*{-.5cm}
\begin{example}
We continue from Example~\ref{run1-ex} for the same HH as there.
 Remember by $a,b,c,d,e$ we denote the shares for each attribute $A,B,C,D,E$ respectively and by $r,s,t$ we denote the sizes of the relations that are {\em relevant} in each residual join, i.e., the number of tuples from each relation that
contribute in the particular residual join.
We always start with the cost expression for the original join, $r   cde +   s  ad +  t abe  $, and then simplify accordingly.
We list the cost expression for every residual join (and in the same order as) in Example~\ref{run1-ex}:
\squishlisttwo
 \item Here all attributes are ordinary, so we simplifly the relation by observing that
 $A$ is dominated by $B$ and $D$ is dominated by $C$, hence $a=1$ and $d=1$ and the expression is:

$r   c +   s   +  t b  $.

 \item Here $B$ is a non-ordinary attribute, hence $b=1$ and then, from the remaining attributes
 only $D$ is dominated by $C$,  hence $d=1$ and the expression is:
 $r   c +   s  a +  t a  $
  \item  $r   c +   s  a +  t a  $, i.e., same as above, only the sizes of the relations will be different.
 \item $r   d +   s  d +  t b  $
 \item Here we set both $b=1$ and $c=1$ and this gives us $r   de +   s  ad +  t ae  $.

\item $r   de +   s  ad +  t ae  $, i.e., same as above, only the sizes of the relations will be different.
\squishend

\end{example}

\bibliographystyle{abbrv}
\bibliography{skew-4-pages}
\end{document}